\newcommand{\subparagraph}{}
\newtheorem{theorem}{Theorem}
\newcommand{\vast}{\bBigg@{2.6}}
\newcommand{\Vast}{\bBigg@{4.6}}
\newcommand{\ignore}[1]{}
\begin{document}

\title{Asynchronous Task Allocation for Federated and Parallelized Mobile Edge Learning}

\author{Umair Mohammad,~\IEEEmembership{Student Member,~IEEE}, Sameh Sorour,~\IEEEmembership{Senior Member,~IEEE}
	
	\thanks{Umair Mohamed is with the Department of Electrical and Computer Engineering, University of Idaho, Moscow, ID, 83843, USA e-mail: (moha2139@vandals.uidaho.edu}
	\thanks{Sameh Sorour is with the School of Computing, Queen's University, Kingston, ON, 83843, Canada e-mail: (sameh.sorour@queensu.ca)}
	\thanks{Manuscript received XXX, XX, 2020; revised XXX, XX, 2020.}}


\maketitle

\begin{abstract}
This paper proposes a scheme to efficiently execute distributed learning tasks in an asynchronous manner while minimizing the gradient staleness on wireless edge nodes with heterogeneous computing and communication capacities. The approach considered in this paper ensures that all devices work for a certain duration that covers the time for data/model distribution, learning iterations, model collection and global aggregation. The resulting problem is an integer non-convex program with quadratic equality constraints as well as linear equality and inequality constraints. Because the problem is NP-hard, we relax the integer constraints in order to solve it efficiently with available solvers. Analytical bounds are derived using the KKT conditions and Lagrangian analysis in conjunction with the suggest-and-improve approach. Results show that our approach reduces the gradient staleness and can offer better accuracy than the synchronous scheme and the asynchronous scheme with equal task allocation.
\end{abstract}

\ignore{\begin{IEEEkeywords}
Keywords goes here.
\end{IEEEkeywords}}

\IEEEpeerreviewmaketitle

\section{Introduction}

Mobile edge computing (MEC) is rapidly re-defining infrastructure with the world moving towards smart cities, smart grids and the internet of everything (IoE). It is expected that by 2022, 41 billion IoE devices will be connected to the internet and will generate up to 800 zettabytes of data \cite{N202003_vXchnge_IoT_data}. The expectation is that the time-critical nature of such data would force us to do 90\% of analytics on the edge servers and the nodes themselves (mobile phones, traffic cameras, UAV's and autonomous vehicles) \cite{Ref2_Original}. 

For example, a wireless edge system may comprise a road-side unit (RSU) connected via dedicated short range communication (DSRC) to a set of on-board units (OBUs) on cars jointly computing a task. 
This paradigm of edge processing has been supported by the latest works in literature about MEC and Hierarchical-MEC (H-MEC) \cite{P0_Letaief_Survey, RefsPap_MEC126, Moha1812:Multi}. One example of such processing is machine learning (ML), which is used in all types of applications such as object recognition and image segmentation; applications will form the basis for edge AI. 

Performing ML in a distributed manner, a.k.a Distributed Learning (DL) is attracting a lot of attention in the ML community in general. In particular, the deployment of DL models over devices connected via wireless edge networks, which can also be called Mobile Edge Learning (MEL) is of increasing interest to researchers \cite{Tuor_01_AdaptiveControl, Wang2019,2019arXiv190907972C,2019arXiv191102417Y,Mohammad2019a,Moha2020b}. Typically, in such schemes, the orchestrator waits for all learners to complete an equal number of iterations of the ML training algorithm and hence, we call this the synchronous approach. The idea behind this approach is to maximize accuracy by minimizing the discrepancy or 'staleness' among the gradients of each learner. 
Recently, some work has been carried out on allowing some staleness so that powerful devices with good communication links may actually provide a faster validation accuracy progression \cite{StalenessAwarePaper, AsyncFedOpt_proof}.

The works of \cite{Tuor_01_AdaptiveControl, Wang2019} aimed to optimize the number of local epochs per node with respect to total global iterations in generic resource-constrained edge environments. However, these works do not take into account the heterogeneous nature of communication and computation in MEC's. Recently, the works of \cite{2019arXiv190907972C,2019arXiv191102417Y} have optimized resource allocation while maintaining accuracy. However, they do not investigate the impact of batch allocation. In contrast, the works of \cite{Mohammad2019a,Moha2020b} and  investigates the impact of maximizing the number of local updates on the learning accuracy by optimizing the size or portion of the local dataset used. Although results show significant gains in achieving a higher validation accuracy,. However, there may still be room for improvement as certain devices may be idle for long times and can do a higher number of updates which may raise the overall accuracy.   

To the best of the authors' knowledge, this work is the first attempt to have a staleness aware algorithm for asynchronous MEL. Here, we emphasize that our model is different than the models in \cite{StalenessAwarePaper, AsyncFedOpt_proof} such that the system is asynchronous in terms of the number of updates each learner is allowed within one global cycle which will be constrained by time. This will make sure that the aggregation is done uniformly for all learners without being affected by stragglers. The novelty is also in the fact task allocation and number of local updates per learner will be jointly optimized in order to minimize the staleness among gradients in order to achieve a higher validation accuracy and hence, it is heterogeneity aware (HA). 

The formulated optimization problem is shown to be an integer quadratically-constrained linear program (IQCLP) which is relaxed to a non-convex QCLP. Analytical approximate solutions are derived based on the KKT conditions and Lagrangian analysis followed by a suggest-and-improve (SAI) approach. The merits of the proposed solution will be compared against the heterogeneity unaware (HU) approach in \cite{Wang2019} and the synchronous method in \cite{Mohammad2019a}. 


\section{System Model for Asynchronous MEL}
\label{Section02__SystemModelParameters}

There are two approaches possible for MEL: parallelized learning (PL) and federated learning (FL). In the first case, the global orchestrator offloads randomly picked subsets to each learner whereas in the second scenario, the learners operate on locally stored datasets. PL can be utilized when the orchestrator does not have enough resources to learn on the complete data and thus, distributes the learning tasks to a set of learners. On the other hand, in FL, the learners may collect their own data and take advantage of learning on a larger dataset while maintaining privacy.

Consider a set of $K$ learners in which learner $k$, $k \in \mathcal{\kappa} \text{ where } \mathcal{\kappa} = \{ 1, 2, \dots, K\}$ trains its local learning model, or learns from a batch of size $d_k$ data samples by performing $\tau_k$ learning epochs/updates/iterations. The total size of all batches is denoted by $d=\sum_{k=1}^K d_k$. Fig. \ref{figure0label} illustrates the described MEL system. The objective is to minimize the local loss function in order to minimize global loss such that accuracy is maximized \cite{Tuor_01_AdaptiveControl} .  
	\begin{figure}[t]
	\centering
	\includegraphics[scale=0.4]{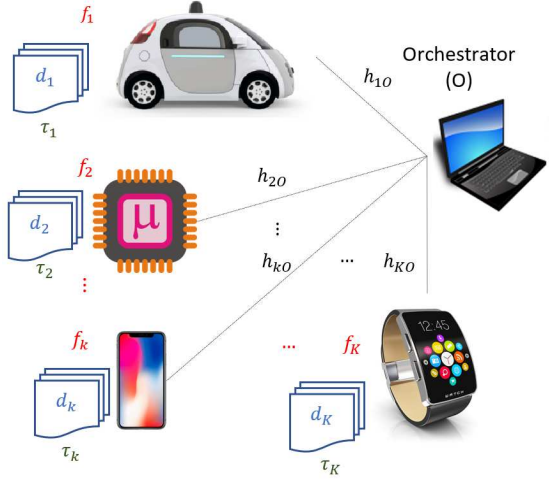}
	\caption{Asynchronous MEL Model}
	\label{figure0label}
\end{figure}

In an asynchronous environment, each learner will perform $\tau_k$ epochs and forward its updated set of parameters $\mathbf{w}_k$ to the orchestrator. The orchestrator will aggregate the model parameters to form a globally optimized set $\mathbf{w}$ and send back the updated model to each learner in the next cycle. Based on the channel conditions and the compute capability of each individual device, it will also offload $d_k$ samples (PL) or assign a value for the subset size $d_k$ (FL) to each node $k$. In both scenarios, it will also assign the number of updates $\tau_k$ to perform at each node. The learners will apply the ML algorithm to their assigned dataset and the process continues.

The time taken for offloading the optimal model and the partial dataset to each node, then for each learner to perform the ML task and send back the locally updated model, and for the orchestrator to perform global aggregation is defined as $t_k$. This time is bounded by $T$ and usually excludes the global aggregation process because it requires a lot less time compared to transmission and ML execution; $T$ is known as the global cycle clock. These global updates are performed a certain number of times. In contrast to the synchronous case, the asynchronous model allows each learner to perform different number of $\tau_k$ updates. Given the above description, the times of each learner $k$, $\forall~k$, whose sum must be bounded by the global update clock $T$, can be detailed as follows:

\begin{equation}
t_{k}^S \ignore{&= \dfrac{B_k^{data}+B_k^{model}}{R_{k}} \nonumber\\
	&} = \dfrac{d_k\mathcal{F}\mathcal{P}_d + \mathcal{P}_m \left(d_k\mathcal{S}_d+\mathcal{S}_m\right)}{W\log_2\left(1+\frac{P_{ko} h_{ko}}{N_0}\right)}
\label{eq:edge_time_sending}
\end{equation}
\begin{equation}
t_k^C = \dfrac{\tau_k d_k C_m}{f_k}
\label{Eq_9_localTimeExecution}
\end{equation}
\begin{equation}
t_{k}^R = \ignore{\dfrac{B_k^{model}}{R_{k}} = } \dfrac{\mathcal{P}_m \left(d_k\mathcal{S}_d+\mathcal{S}_m\right)}{W\log_2\left(1+\frac{P_{ko} h_{ko}}{N_0}\right)}
\label{eq:edge_time_receiving}
\end{equation}

Please note that equations (\ref{eq:edge_time_sending})-(\ref{eq:edge_time_receiving}) represent the following: $t_k^S$ denotes the time taken to transmit the global parameter set $\mathbf{w}$ and the allocated batch to learner $k$ \footnote{Note that the the first term of the numerator will not exist for FL.}, the time needed by learner $k$ to execute one update of the ML algorithm, and $t_k^R$ which is the time needed for learner $k$ to send its updated local parameter matrix $\tilde{\mathbf{w}}_k$ back to the orchestrator, respectively.

The first and second terms in the numerator of (\ref{eq:edge_time_sending}) give the total sizes in bits of the transmitted data and the optimal model parameter set $\mathbf{w}$, respectively. The total transmitted data size in bits per sample is a product of the number of features $\mathcal{F}$ and the storage precision/method $\mathcal{P}_d$. In the second term of the numerator, $\mathcal{P}_m$ represents the precision with which model parameters are stored, $\mathcal{S}_d$ and $\mathcal{S}_m$ each describe the relationship of the ML model size in bits to the allocated batch size and the ML model architecture, respectively. The denominator represents the achievable rate with respect to the channel parameters where $W$ is the available bandwidth, $N_0$ is the noise power spectral density, $P_{ko}$ is the available transmission power and $h_{ko}$ represents the channel parameters between the learner and the orchestrator. It is assumed that the channel is perfectly reciprocal within one global cycle. 

The time needed by learner $k$ to execute one update of the ML algorithm is given by (\ref{Eq_9_localTimeExecution}) where $\mathcal{C}_m$ is the complexity of the learning technique in terms of clock cycles required and $f_k$ is the processing power of each learner $k$ in clocks per second. ML algorithms typically go over all features sequentially for each data sample at a time (or epoch), so, the time for one update for one sample is multiplied by $\tau_k$ and $d_k$. (In case of batch learning at the local node, the complexity expression changes but $t_k^C$ remains the same). Thus, the total time $t_k$ taken by learner $k$ to complete the above three processes is equal to:
\begin{equation}
\label{Eq_12_timeForLearnerk}
t_k = t_{k}^S+\tau_k t_k^C + t_k^R
\end{equation}
The total time $t_k$ can be re-written as a quadratic expression of the optimization variables $\tau$ and $d_k$ as shown in (\ref{Eq_16_C1isCompact})\footnote{Note that for FL, the first term of the numerator in $C_k^1$) will not exist.}.

\begin{align}
t_k &= \dfrac{d_k\mathcal{F}\mathcal{P}_d + 2\mathcal{P}_m \left(d_k\mathcal{S}_d+\mathcal{S}_m\right)}{W\log_2\left(1+\frac{P_{ko} h_{ko}}{N_0}\right)} + \tau_k \dfrac{d_k\mathcal{C}_m}{f_k}   \nonumber\\
&= C_k^2  \tau_k d_k + C_k^1 d_k + C_k^0
\label{Eq_16_C1isCompact}
\end{align}
The quadratic, linear and constant coefficients are given by $C_k^2$, $C_k^1$ and $C_k^0$, respectively, where, $C_k^2 = \frac{\mathcal{C}_m}{f_k}$, $C_k^1 = \frac{\mathcal{F}\mathcal{P}_d+2\mathcal{P}_m\mathcal{S}_d}{W\log_2\left(1+\frac{P_{ko} h_{ko}}{N_0}\right)}$, and $C_k^0 = \frac{2\mathcal{P}_m\mathcal{S}_m}{W\log_2\left(1+\frac{P_{ko} h_{ko}}{N_0}\right)}$.

\section{Problem Formulation}
The staleness $s$ between any two learners can be described as the difference between the number of local ML iterations each has performed as shown below:
\begin{equation}
s = \lvert \tau_k - \tau_l \rvert
\label{eq:staleness_def}, ~k\in\kappa~ \text{\&} ~l \in \left\{\mathcal{\kappa} \mid l < k ~\forall~k \right\}
\end{equation}
It has been shown in the literature that the loss function of SGD-based ML is minimized (and thus the learning accuracy is maximized) by minimizing the staleness between the gradients in Asynchronous SGD \cite{StalenessAwarePaper,AsyncFedOpt_proof}. Although our model is different, we show in Appendix A\ignore{ of \cite{Mohammad2019_AsynArxiv}} a lower staleness in our model can reduce model divergence and improve accuracy.
 
Overall, the maximum staleness has to be minimized while satisfying the global cycle time constraint.
Clearly, the relationship between $t_k$ and the optimization variables $d_k$ and $\tau$ is quadratic. Furthermore, the optimization variables $\tau$ and $d_k$ $\forall~k$ are all non-negative integers. Consequently, the problem can be formulated as an ILP with quadratic and linear constraints as follows: \footnote{Note that the problem type and solution remain the same with different $C^1_k$ expressions for the two distinct scenarios of FL and OL.}
\begin{subequations}
	\begin{align}
	&\operatornamewithlimits{min}_{\tau_k,~{d}_k~\forall~k}  \quad \max \{s\} \\
	& \quad \nonumber\\
	\text{s.t. }\qquad & C_k^2  \tau_k d_k + C_k^1 d_k + C_k^0 = T, \quad k = 1,\ldots,K \label{orignial-time-const}\\
	& \sum_{k = 1}^{K}d_k = d \label{orignial-batch-const}\\ 
	& \tau_k \in \mathbb{Z}_+, \quad k \in \mathcal{\kappa} \label{orignial-tau-const}\\
	& d_k \in \mathbb{Z}_+, \quad k \in \mathcal{\kappa} \label{orignial-d-const} \\
	& d_l \leq d_k \leq d_u, \quad k \in \mathcal{\kappa} \label{bounds-d-const}
	\end{align}
	\label{Eq_13_OurProb}
\end{subequations}
Constraint (\ref{orignial-time-const}) guarantees that $t_k = T$ $\forall~k$, which means that all devices work for the full allotted time though they may perform different number of local updates. Constraint (\ref{orignial-batch-const}) ensures that the sum of batch sizes assigned to all learners is equal to the total dataset size that the orchestrator needs to analyze. Constraints (\ref{orignial-tau-const}) and (\ref{orignial-d-const}) are simply non-negativity and integer constraints for the optimization variables. Please note that the solutions of (\ref{Eq_13_OurProb}) having any $\tau_k$ and/or $d_k$ being zero represent conditions where MEL is not feasible for learner $k$. Constraint (\ref{bounds-d-const}) bounds the number of data points dispersed to each learner in order to ensure that each node performs learning on some part of a dataset and no single node is burdened with too many data samples. Therefore, the problem is an ILPQC, which is well-known to be NP-hard \cite{QIP_NP_ArXiV}. We will thus propose a simpler solution to it through the relaxation of the integer constraints in the next section.

\section{Proposed Solution}\label{Section3_Solution}

\subsection{Problem Transformation and Relaxation}
\begin{subequations}
	\begin{align}
	&\operatornamewithlimits{min}_{\tau_k,~{d}_k~\forall~k} z \\
	& \quad \nonumber\\
	\text{s.t. } \qquad & \lvert \tau_k-\tau_l \rvert \leq z, ~ k \in \kappa ~ \& ~l \in \left\{\kappa \mid l > k ~ \forall~k \right\} \label{relaxed-staleness-const} \\
	& C_k^2  \tau_k d_k + C_k^1 d_k + C_k^0 = T, \quad k \in \mathcal{\kappa} \label{relaxed-time-const}\\
	& \sum_{k = 1}^{K}d_k = d \label{relaxed-batch-const}   \\ 
	& \tau_k \geq 0, \quad k \in \kappa \ \label{relaxed-tau-const}\\
	& d_l \leq d_k \leq d_u, \quad k \in \mathcal{\kappa} \label{relaxed-bounds-d-const}
	\end{align}
	\label{Eq_13_RelaxedProblem}
\end{subequations}

As described in the previous section, the problem of interest is NP-hard due to its integer decision variables. We simplify the problem by relaxing the integer constraints in (\ref{orignial-tau-const}) and (\ref{orignial-d-const}), solving the relaxed problem, then flooring the obtained real results back into integers. The problem is re-formulated by applying a min-max transformation and relaxing of the integer constraints as shown in (\ref{Eq_13_RelaxedProblem}).

We introduce a slack variable $z$ and add an additional constraint to ensure the staleness is less than $z$ which will guarantee that the maximum staleness is minimized. Please note that constraint (\ref{orignial-d-const}) has been eliminated due to the lower bound on $d_k$. The above resulting program becomes a linear program with quadratic constraints. This problem can be solved by using interior-point or ADMM methods using commercial solvers. From the analytical viewpoint, the associated matrices to each of the quadratic constraints in (\ref{relaxed-time-const}) can be written in a symmetric form. However, these matrices will have two non-zero values that are positive and equal\ignore{ to each other}. The eigenvalues will thus sum to zero, which means these matrices are not positive semi-definite, and hence, the relaxed problem is non-convex. Consequently, we cannot derive the optimal solution of this problem analytically. Hence, we will calculate upper bounds using Lagrangian analysis followed by an improve step to reach the feasible solution. 

\subsection{Upper Bounds using Lagrangian Analysis and KKT conditions}
Let $\mathbf{\tau} = \{\tau_1, \ldots, \tau_k, \ldots, \tau_K\}$ and $\mathbf{d} = \{d_1, \ldots, d_k, \ldots, d_K\}$. The Lagrangian of the relaxed problem is given by (\ref{lagrangian}). The Lagrangian multipliers associated with the time constraints of the $K$ learners in (\ref{relaxed-time-const}), the total batch size constraint in (\ref{relaxed-batch-const}), the non-negative constraints of the number of epochs at each node $\tau_k$ in (\ref{relaxed-tau-const}) and the lower and upper bounds in (\ref{relaxed-bounds-d-const}) are given byL: $\lambda_k$ $k\in\kappa$, $\omega$, and $\nu_k$/$\nu_k^\prime ~ \forall ~ k\in \kappa$, respectively. The multipliers $\mu_n$ and $\mu_n^\prime$ $n\in\{1,\dots,N\}$ are associated with (\ref{relaxed-staleness-const})\ignore{the staleness between each two learners being less than the slack variable which we minimize over $\tau_k$ and $d_k$}. Note that the absolute value constraint in (\ref{relaxed-staleness-const}) can be decoupled as $\tau_k-\tau_l\leq z$ and $\tau_l-\tau_k\leq z$, $k \in \mathcal{\kappa} ~ \& ~l \in \left\{\mathcal{\kappa} \mid l > k ~ \forall~k \right\}$.
\begin{multline}
\label{lagrangian}
L\left(z, \mathbf{\tau}, \mathbf{d}, \mathbf{\lambda}, \mathbf{\alpha}, \mathbf{\omega}, \mathbf{\nu}, \mathbf{\nu^{\prime}}, \mathbf{\mu}, \mathbf{\mu^{\prime}} \right) = z + \\ \sum_{k = 1}^{K} \lambda_k\left(C_k^2  \tau_k d_k + C_k^1 d_k + C_k^0 - T\right) + \alpha_k\tau_k + \\ \omega\left(\sum_{k = 1}^{K}d_k - d\right) + \sum_{k=1}^K \nu_k \left(-d_k+d_l\right) + \sum_{k=1}^K \nu_k^{\prime} \left(d_k-d_u\right) + \\ \sum_{n=1}^N \mu_n \left(-z+\tau_{c_{n,1}}-\tau_{c_{n,2}}\right) + \\ \sum_{n=1}^N \mu_n^{\prime} \left(-z-\tau_{c_{n,1}}+\tau_{c_{n,2}}\right)    
\end{multline}

The matrix $\mathbf{c} \in \mathbf{\mathbb{R}}^{N \times 2}$ where N is the number of possibilities of mutual staleness for K set of users, i.e. $N = \binom{K}{2}$. For example, for a set of 4 users, $N=6$ and the matrix of possibilities will be:
\begin{equation}
\mathbf{c}=
\left[ {\begin{array}{cccccc}
	1 & 1 & 1 & 2 & 2 & 3\\
	2 & 3 & 4 & 3 & 4 & 4\\
	\end{array} } \right]^T
\label{eq_cexample}
\end{equation}

Using the KKT conditions $\nabla L_\mathbf{x} = 0$, the following theorem gives a way to find the optimal values of $\tau_k$ and $d_k$ using the Lagrange multipliers.
\begin{theorem}
	The optimal number of updates each user node can perform $\tau_k$ can be given by:
	\begin{equation}
	\tau_k^* = -\dfrac{\lambda_k C_k^1 + \nu_k+\nu_k^\prime + \omega}{\lambda_k C_k^2} \qquad \forall~k
	\label{Eq_36_AnalBound}    
	\end{equation}
	Moreover, the optimal value of $ d_k$ can be given by the following equation:
	\begin{equation}
	d_k^* = -\dfrac{u_k+u_k^\prime+\alpha_k}{\lambda_k C_k^2} \qquad \forall~k
	\label{Eq_30_BoundOnDk}
	\end{equation}
	Each element of the vectors $\mathbf{u}$ and $\mathbf{u}^\prime$ is a function of the Lagrange multipliers $\mu_n$ and $\mu_n^\prime$. Please refer to the proof. 
	\label{theorem1}
\end{theorem}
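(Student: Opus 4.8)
The plan is to push the first-order stationarity part of the KKT system, $\nabla_{\mathbf{x}} L = 0$ with $\mathbf{x} = (z,\mathbf{\tau},\mathbf{d})$, through the Lagrangian (\ref{lagrangian}) and then solve each scalar equation for the variable of interest. First I would differentiate with respect to $d_k$: only the $k$-th time constraint (\ref{relaxed-time-const}), the total-batch constraint (\ref{relaxed-batch-const}), and the two bound constraints on $d_k$ in (\ref{relaxed-bounds-d-const}) contain $d_k$, so $\partial L/\partial d_k = \lambda_k\left(C_k^2\tau_k + C_k^1\right) + \omega - \nu_k + \nu_k' = 0$. Isolating $\tau_k$ yields (\ref{Eq_36_AnalBound}) (up to the sign with which the bound multipliers are bookkept). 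I would also record $\partial L/\partial z = 1 - \sum_{n=1}^N(\mu_n+\mu_n') = 0$, i.e.\ the staleness multipliers are normalized to sum to one; this is the condition tying the slack variable $z$ to the dual.

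The core step is $\partial L/\partial\tau_k = 0$. The $k$-th time constraint contributes $\lambda_k C_k^2 d_k$ and the non-negativity multiplier contributes $\alpha_k$; the delicate part is the staleness block, because $\tau_k$ appears in constraint $n$ of (\ref{relaxed-staleness-const}) precisely when $k$ equals $c_{n,1}$ or $c_{n,2}$, and with opposite signs in the $\mu_n$ and $\mu_n'$ copies of that constraint. Walking over the rows of the incidence matrix $\mathbf{c}$, I would collect these into $u_k := \sum_{n:\,c_{n,1}=k}\mu_n - \sum_{n:\,c_{n,2}=k}\mu_n$ and $u_k' := \sum_{n:\,c_{n,2}=k}\mu_n' - \sum_{n:\,c_{n,1}=k}\mu_n'$, so that $\partial L/\partial\tau_k = \lambda_k C_k^2 d_k + \alpha_k + u_k + u_k' = 0$, which rearranges to (\ref{Eq_30_BoundOnDk}). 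This is also what defines the vectors $\mathbf{u},\mathbf{u}'$ referenced in the statement.

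The main obstacle is exactly this bookkeeping of the pairwise-staleness terms: one must ensure that each unordered pair is counted once, that the sign attached to $\mu_n$ versus $\mu_n'$ is tracked correctly through the decoupling of $|\tau_k-\tau_l|\le z$ into its two one-sided inequalities, and that the resulting sums are packaged so the final formulas are as compact as displayed. A secondary remark I would make — not needed for the stated equalities but relevant to how they are used — is that since the relaxed problem was shown above to be non-convex, these are only \emph{necessary} conditions; the expressions (\ref{Eq_36_AnalBound})--(\ref{Eq_30_BoundOnDk}) should be read together with dual feasibility and complementary slackness (which decide which of $\alpha_k,\nu_k,\nu_k',\mu_n,\mu_n'$ are active) to obtain the upper bounds that subsequently feed the suggest-and-improve step.
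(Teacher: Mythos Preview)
Your proposal is correct and follows essentially the same route as the paper: differentiate the Lagrangian (\ref{lagrangian}) with respect to $d_k$ to obtain the linear equation for $\tau_k^*$, and with respect to $\tau_k$ to obtain the linear equation for $d_k^*$, collecting the pairwise-staleness contributions into $u_k,u_k'$. The only cosmetic difference is that the paper's Appendix~C writes $u_k,u_k'$ via explicit index ranges $n_k,N_k$ tied to the lexicographic ordering of the pair matrix~$\mathbf{c}$, whereas you express the same sums in set-builder form over $\{n:c_{n,1}=k\}$ and $\{n:c_{n,2}=k\}$; your parenthetical about the sign on $\nu_k$ is also apt, since the Lagrangian as written indeed yields $-\nu_k+\nu_k'$ rather than $+\nu_k+\nu_k'$.
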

\begin{proof}
	The proof of this theorem can be found in Appendix B. The details about how to obtain $\mu$ and $\mu^\prime$ can be found in Appendix C.
\end{proof} 

As suspected, due to the relaxed problem being non-convex with quadratic constraints, in some situations, the approach described above resulted in infeasible solutions. In that case, we performed constraint checks and then used the initial solution to carry out \ignore{ for $d_k^*$ $\forall~k$ in (\ref{Eq_30_BoundOnDk}) and $\tau^*$ from solving (\ref{Eq_36_AnalBound}) should undergo} suggest-and-improve (SAI) steps to reach a feasible solution. The set of feasible solutions was used as a starting point to the less complex improve method in order to reach the optimal solution.  

\section{Results}
This section presents the results of the proposed scheme by testing in MEL scenarios emulating realistic edge node environments and learning. We show the merits of the proposed HA solution compared to performing asynchronous learning with the HU method in terms of staleness and learning. For the staleness, one of the the metrics will be maximum staleness as described in (\ref{eq:staleness_def}). In addition, we would like to introduce average staleness as shown in (\ref{eq:staleness_def_avg}) which will give a measure of the mutual staleness between every two learners for all learners. The metric for evaluating the learning performance is validation accuracy.
\begin{equation}
s_{avg} = \dfrac{1}{N}\sum_{n=1}^{N}\lvert \tau_{c_{n,1}}-\tau_{c_{n,2}} \rvert 
\label{eq:staleness_def_avg}
\end{equation}
\ignore{as well as the OPTI-based numerical solution to the relaxed problem in (\ref{Eq_13_RelaxedProblem}).}
\ignore{ We will first introduce the simulation setting, then present the testing results.}

\subsection{Simulation Environment, Dataset, and Learning Model}
The simulation environment considered is an indoor environment which emulates 802.11-type links between the edge nodes that are located within a radius of 50m. We assume that that approximately half of the nodes have the processing capabilities of typical computing devices such as desktops/laptops and the other half consists of industrial micro-controller type nodes such as a Raspberry Pi. The employed channel model is summarized in Table \ref{Table_1_OfParameters}. 
\begin{table}[!t]
	\centering
	\small
	\caption{List of simulation parameters}
	\begin{tabular}{|l|l|}
		\hline
		Parameter                                 & Value                         \\ \hline
		Attenuation Model                           & $7+2.1\log(R)$ dB \cite{Moha1812:Multi}         \\ \hline
		System Bandwidth $B$                        & 100 MHz                              \\ \hline
		Node Bandwidth $W$                          & 5 MHz                               \\ \hline
		Device proximity $R$                        & 50m                                 \\ \hline
		Transmission Power $P_k$                   & 23 dBm                              \\ \hline
		Noise Power Density $N_0$                  & -174 dBm/Hz                         \\ \hline
		Computation Capability $f_k$             & 2.4 GHz and 700 MHz             \\ \hline
		MNIST Dataset size $d$                 & 60,000 images                        \\ \hline
		MNIST Dataset Features $\mathcal{F}$  & 784 ($~28 \times 28~$) pixels     \\ \hline
	\end{tabular}
	\label{Table_1_OfParameters}
\end{table}

As a benchmark, the MNIST dataset \cite{MNIST_IEEE} is used to evaluate the proposed scheme. The training data comprises 60,000 28x28 pixel images contributing 784 features each. The ML algorithm tested is the a simple deep neural network with the following configuration $[784, 300, 124, 60, 10]$. The size of the resulting model is 8,974,080 bits, which is fixed for all edge nodes, and the forward and backward passes will require 1,123,736 floating point operations \cite{Mohammad2019a}.

\subsection{Staleness Analysis}
\begin{figure}[t]
	\centering
	\includegraphics[scale=0.56]{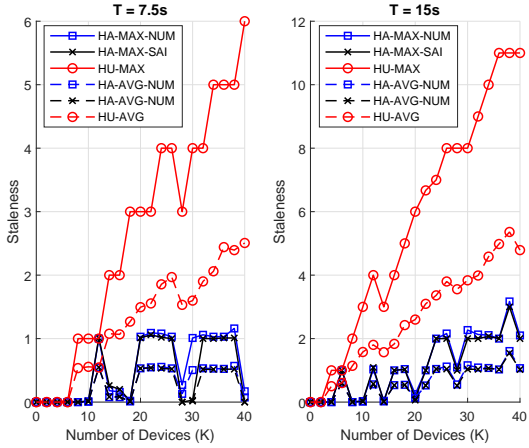}
	\caption{Maximum and Average Staleness vs $K$ for $T=7.5$s and $T = 15$s.}
	\label{figure2label}
\end{figure}
Fig \ref{figure2label} shows the maximum and average staleness versus the number of nodes for global cycle times of 7.5s and 15s for the HA asynchronous scheme from the numerical (HA-MAX/AVG-NUM) and SAI-based solutions (HA-MAX/AVG-NUM) and the HU scheme (HA-MA/AVG) as well. In general, the SAI-based approach gives similar staleness to the numerical solution. The general trend is that as the number of updates $\tau_k$ increase, the staleness tends to increase. However, for $T=7.5s$, the maximum staleness does not exceed around 1 and the average staleness is between 0.4-0.6 as K increases for the proposed HA scheme. For example, for our HA scheme with 20 users at $T = 7.5$s, the maximum staleness is 1 compared to 4 for the HU which is 400\% higher and the average staleness is 1.5 compared to 0.5 for our scheme which is 300\% higher. One curious aspect to note is that for certain specific number of learners or $K$, the asynchronous scheme is able to find an optimal solution where the staleness is zero. One such example is $K=$ 14 for $T=15$s and $K=18$ for $T=7.5$s.  

\subsection{Validation Accuracy}
Fig. \ref{figure4label} shows the learning accuracy for a system with a limit on the global cycle time of $T=15$s consisting of 10, 15 and 20 learners, respectively. For example, in the case with 10 learners, the proposed HA scheme (HA-Asyn) achieves an accuracy of 95\% within 4 updates or 1 minute of learning as compared to the HA synchronous scheme (HA-Sync) in \cite{Mohammad2019a} which requires 8 updates, in other words, we obtain a gain of 50\%. In contrast, the HU scheme (HU-Asyn) fails to converge or even achieve a 95\% accuracy. An accuracy of 95\% is achieved by our scheme within 3 updates with 15 users whereas the other schemes require 4 updates; which gives us a gain of 25\%. Moreover, our scheme achieves an accuracy of 97\% within 8 updates whereas the other two methods require 10 global cycles leading to a gain of 25\%. 

A similar gain is achieved for a system with 20 learners for the 95\% accuracy mark. For the case of 97\% accuracy, our scheme requires 7 updates whereas the ETA needs 11 cycles, representing a gain of of about 64\%. On the other hand, the synchronous scheme requires 8 updates which translates to a gain of only 12.5\%. The gain appears marginal compared to the synchronous scheme because as the number of users increase, each learner has to process less data which means a larger number of synchronized updates can be done even in heterogeneous conditions. In contrast, the gain is significant compared to the ETA scheme because the staleness for ETA increases significantly versus $K$ for a fixed global cycle $T$.
\begin{figure}[t]
	\centering
	\includegraphics[scale=0.56]{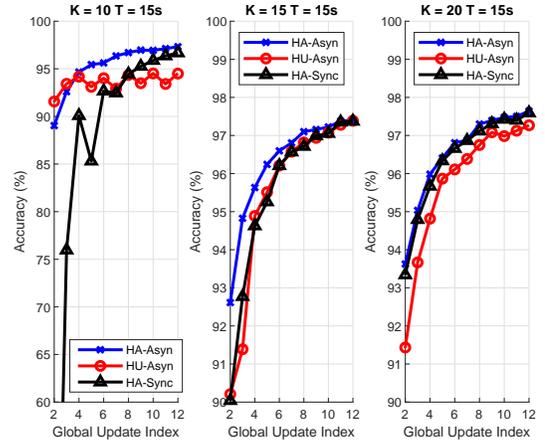}
	\caption{Learning accuracy progression after global update cycles for $K=10, 15 \text{ and } 20$ for $T = 15$s}
	\label{figure4label}
\end{figure}
\ignore{
	Fig. \ref{figure5label} shows the accuracy with 20 users for a global cycle time of $T=7.5$s and a total learning time of 1 minute. The results here are interesting in the sense that our scheme achieves an accuracy of 95\% in 5 cycles whereas the synchronous requires 7 cycles; in other words, a gain of 28.57\%. For the asynchronous case with equal task allocation, the validation accuracy jumps to above 92\% in the first 4 updates, but then saturates to around 94\%. After some deeper investigation, it was found out that the loss function saturates to a very high value as compared to both, the asynchronous and synchronous schemes with optimized task allocation.
	\begin{figure}[t]
		\centering
		\includegraphics[scale=0.58]{MNIST_Validation_K20_T7p5}
		\caption{Achievable number of local update cycles by all schemes (a) vs $K$ for $T=30$s and $60$s (b) vs $T$ for $K=10$ and $20$.}
		\label{figure5label}
	\end{figure}
}

\section{Conclusion}
This paper proposed a scheme to optimize batch allocation for asynchronous MEL by reducing the staleness among the gradients in the MEL system by minimizing the maximum difference between the number of updates done by each learners. The resulting optimization problem was an NP-hard IQCLP which was relaxed to a non-convex problem which was solved numerically and theoretically using Lagrangian analysis followed by the SAI approach. Through extensive simulations on the well-known MNIST dataset, the proposed scheme was shown to perform better than asynchronous ETA and the synchronous schemes in terms of learning accuracy.

\appendices
\section{Staleness and Model Divergence}
\ignore{The convergence bounds have been derived and well-discussed in \cite{Tuor_01_AdaptiveControl}. For completeness, we will present some of the important results here in order to support our analysis. }
Let us assume that a total of $L$ updates occur where a global aggregation occurs at any iteration $l$ for $l = 1,~\ldots,~L$. Between any global aggregation $g$ and $g+1$, in the synchronous version in \cite{Mohammad2019a}, each learner would have performed $\tau$ updates whereas they will each perform $\tau_k$ local updates in the proposed model. 
Let us assume, to facilitate the analysis, that the global aggregations occur at integer multiples of $\tau_m$ which is the maximum possible local learning iterations by the best performing learner. 
\ignore{I.e. the local updates occur at every iteration $l=1,\ldots,L$ and a global update will occur whenever $l = g\tau_m$ for $g=1,\ldots,G$.}
For any interval $[g]$ defined over $[g(\tau_m-1),g\tau_m]$, define an auxiliary global model denoted by $\hat{\mathbf{w}}$ which would have been calculated if a global update occurred as follows:
\begin{equation}
\hat{\mathbf{w}}_{[g]}[l] = \hat{\mathbf{w}}{[g]}[l-1]-\eta \nabla F(\hat{\mathbf{w}}_{[g]}[l-1]) 
\end{equation} 
The learning rate is given by $\eta$ and the loss function can be described by $F$.

Let the local model parameter set of learner $k$ be denoted by $\mathbf{w}_k$ and the local loss by $F_k(\mathbf{w}_k)$. Then, the optimal model at iteration $l$ can be obtained by:
\begin{equation}
\mathbf{w}[l] = \dfrac{1}{d}\sum_{k = 1}^K d_k \mathbf{w}_k[l]
\end{equation} 
The optimal $\mathbf{w[l]}$ will only be visible at an iteration $l$ such that a global aggregation occurs in that iteration. Then, the global loss can be defined by:
\begin{equation}
F(\mathbf{w}) = \dfrac{1}{d}\sum_{k = 1}^K d_k F_k(\mathbf{w})
\end{equation} 

The following assumptions are made about the local loss function $F_k(\mathbf{w})$ at learner $k$: $F_k(\mathbf{w})$ is convex, it is $\rho$-Lipschitz $\lVert F_k(\mathbf{w})-F_k(\bar{\mathbf{w}}) \rvert \leq \rho \lvert \mathbf{w}-\bar{\mathbf{w}} \rvert$, and $\beta$-smooth $ \lVert \nabla F_k(\mathbf{w})-\nabla F_k(\bar{\mathbf{w}}) \rvert \leq \beta \lvert \mathbf{w}-\bar{\mathbf{w}} \rvert$ for any $\mathbf{w}$, $\bar{\mathbf{w}}$ for any $\mathbf{w}$, $\bar{\mathbf{w}}$. \ignore{These assumptions will hold for ML models with convex loss function such as linear regression and SVM. By simulations, we will show that the proposed solutions work for non-convex models such as the neural networks with ReLU activation. }

It has been shown that for such a model, the difference between the global loss and the auxiliary loss at any iteration $l$ within an interval $g$, for $l = 1,\ldots,L$ and $g = 1,\ldots,G$, can be related to the local loss of a learner in the following way:
\begin{multline}
\left\lVert \mathbf{w}[l+1]-\hat{\mathbf{w}}[l+1] \right\rVert \leq \left\lVert \mathbf{w}[l]-\hat{\mathbf{w}}[l] \right\rVert + \\ \dfrac{\eta\beta}{d}\sum_{k=1}^{K} f_k\left[l-g\tau_m\right]
\label{staleness_div}
\end{multline}
Going back to our assumption that the global aggregation happens such that one or a set of learners have performed $\tau_m$ local updates since the previous global aggregation which is given by $t$, we can re-write the difference expression as:
\begin{multline}
\left\lVert \mathbf{w}[l+1]-\hat{\mathbf{w}}[l+1] \right\rVert \leq \left\lVert \mathbf{w}[l]-\hat{\mathbf{w}}[l] \right\rVert + \\ \dfrac{\eta\beta}{d}\sum_{k=1}^{K} f_k\left[t-\tau_k\right]
\label{staleness_div2}
\end{multline}
As it can be observed from the second term on the right-hand side in \ref{staleness_div}2, the model divergence is dependent upon the contributions from the local loss functions. 

It is expected that learners that have performed less local updates will have a higher loss and therefore, the model parameters will be further away from the optimal set. Hence, the scenario where we have many learners that have performed a low number of updates compared to the best performer, it is expected that the loss in general will be high. However, when all learners in general have performed a higher number of updates, the impact of staleness maybe lowered and the synchronous model of \cite{Mohammad2019a} may outperform the proposed architecture. These hypothesis are difficult to prove analytically but have been demonstrated experimentally.
\ignore{It is very difficult to prove these hypotheses theoretically, therefore we have shown this experimentally by analyzing both, the staleness for the proposed asynchronous model and the validation accuracy of three approaches, HA synchronous and asynchronous, and HU asynchronous.}

\ignore{Let us also assume that the local loss function at $F_k(\mathbf{w})$ does not diverge by more than $\delta_k$ such that $\lvert F_k(\mathbf{w})- F(\mathbf{w}) \rvert \leq \delta_k$ and $\delta = \frac{\sum_k d_k \delta_k}{d}$. Furthermore, $\lvert \mathbf{w} - \mathbf{v}_{[g]}[l-1] \rvert \leq h(l-(g-1)\tau)$. 
Then, it is shown in \cite{Tuor_01_AdaptiveControl}}

\section{Proof of Theorem \ref{theorem1}}
From the KKT optimality conditions, we have the following condition on the Lagrangian in (\ref{lagrangian}):
\begin{multline}
\nabla L_{z, \mathbf{\tau}, \mathbf{d}} = \nabla z + \\ \sum_{k = 1}^{K} \lambda_k \nabla \left(C_k^2  \tau_k d_k + C_k^1 d_k + C_k^0 - T\right) - \sum_{k=1}^K \nabla\alpha_k\tau_k + \\ \sum_{k=1}^K \nu_k \nabla  \left(-d_k+d_l\right) + \sum_{k=1}^K \nu_k^{\prime} \nabla \left(d_k-d_u\right) + \\ \sum_{n=1}^N \mu_n \nabla \left(-z+\tau_{c_{n,1}}-\tau_{c_{n,2}}\right) + \\ \sum_{n=1}^N \mu_n^{\prime} \nabla \left(-z-\tau_{c_{n,1}}+\tau_{c_{n,2}}\right) + \\ \omega\ \nabla \left(\sum_{k = 1}^{K}d_k - d\right)   = 0 \label{Eq_20_KKTlast}      
\end{multline}
The following sets of equations can be obtained after applying the derivatives for $\tau_k$ and $d_k$ in terms of the Lagrange multipliers, respectively, as shown in (\ref{eq:tk_lagrange}) and (\ref{eq:dk_lagrange}).
\begin{equation}
\lambda_k C_k^2 \tau_k^* + \lambda_k C_k^1 + \nu_k + \nu_k^\prime + \omega = 0, ~ \forall k
\label{eq:tk_lagrange}
\end{equation}
\begin{equation}
\lambda_k C_k^2 d_k^* + u_k + u_k^\prime + \alpha_k = 0 , ~ \forall k
\label{eq:dk_lagrange}
\end{equation}
Solving for $\tau_k^*$ and $d_k^*$ will give the results shown in (\ref{Eq_36_AnalBound}) and (\ref{Eq_30_BoundOnDk}). The procedure to obtain $u_k$ and $u_k^\prime$ is given in Appendix B.

\section{Obtaining $\mu$ and $\mu^\prime$}
The maximum staleness constraint in (\ref{relaxed-staleness-const}) can be re-written as two separate inequalities as shown below:
\begin{equation}
-z+\tau_k-\tau_l\leq 0
\label{Eq_ineq1}
\end{equation}
\begin{equation}
-z-\tau_k+\tau_l\leq 0
\label{Eq_ineq2}
\end{equation}
The $k^{th}$ element of the vector $\mathbf{u}$ denoted as $u_k$ is associated with the lagrange multipliers of the maximum staleness constraint inequality in (\ref{Eq_ineq1}) whereas $u_k^prime$ is associated with the inequality in (\ref{Eq_ineq2}), and the way to calculate them is shown in (\ref{Eq_del_uk}) and (\ref{Eq_del_uk_prime}), respectively.
\begin{equation}
u_k = \nabla_{\tau_k} \sum_{n=1}^N \mu_n\left(-z+\tau_k-\tau_l\right) 
\label{Eq_del_uk}
\end{equation}
\begin{equation}
u_k^\prime = \nabla_{\tau_k} \sum_{n=1}^N \mu_n^\prime\left(-z-\tau_k+\tau_l\right)
\label{Eq_del_uk_prime}
\end{equation}
As defined earlier, $k \in \kappa$ and $l \in \left\{\kappa \mid l > k ~ \forall~k \right\}$. 

In this case, after some manipulations, $u_k$ can be defined as the following:
\begin{equation}
u_k = \sum_{j=n_k}^{N_k} \mu_j - \sum_{j=1}^{K-1} \mu_{n_j+(k-j)}
\label{Eq_uksol}
\end{equation}
The start index and end indices of the first summation in (\ref{Eq_uksol}) are defined in (\ref{Eq_start_u}) and (\ref{Eq_end_u}), respectively.
\begin{equation}
n_k = 1+\sum_{m=0}^{k-1}\left(K-m\right)
\label{Eq_start_u}
\end{equation}
\begin{equation}
N_k = \sum_{m=1}^k \left(K-m\right)
\label{Eq_end_u}
\end{equation}

On the other hand, $u_k^\prime$ can be simply be defined as the following:
\begin{equation}
u_k^\prime = -\sum_{j=n_k}^{N_k} \mu_j^\prime + \sum_{j=1}^{K-1} \mu_{n_j+(k-j)}^\prime
\label{Eq_uksol_prime}
\end{equation}
\ignore{
\section*{Acknowledgment}
The authors would like to thank...}

\balance
\bibliographystyle{IEEEtran}
\raggedright
\bibliography{191RefsList}

\ignore{
	\begin{IEEEbiography}{Yuguang ``Michael'' Fang}
Biography text here.
\end{IEEEbiography}
}


\end{document}